\newtheorem{definition}{Definition}
\newtheorem{theorem}{Theorem}
\begin{document}
\title{The $k$-path coloring problem in graphs with bounded treewidth: an application in integrated circuit manufacturing}

%
%
%
%

\author{Dehia Ait-Ferhat, Vincent Juliard, Gautier Stauffer, Andres Torres}

%
%
%
%
%


\maketitle

\begin{abstract}

In this paper, we investigate the $k$-path coloring problem, a variant of vertex coloring arising in the context of integrated circuit manufacturing. In this setting, typical industrial instances exhibit a `tree-like' structure. We exploit this property to design an efficient algorithm for our industrial problem: (i) on the methodological side, we show that the $k$-path coloring problem can be solved in polynomial time on graphs with bounded treewidth and we devise a simple polytime dynamic programming algorithm in this case (not relying on Courcelle's celebrated theorem); and (ii) on the empirical side, we provide computational evidences that the corresponding algorithm could be suitable for practice, by testing our algorithm on true instances obtained from an on-going collaboration with Mentor Graphics. We finally compare this approach with integer programming on some pseudo-industrial instances. It suggests that dynamic programming cannot compete with integer programming when the tree-width is greater than three. While all our industrial instances exhibit such a small tree-width, this is not for granted that all future instances will also do, and this tend to advocate for integer programming approaches.

\end{abstract}

\section{Introduction}\label{introduction}

Integrated circuits are made of several layers. In a nutshell,  the bottom layers contains the transistors, while the other layers (called {\em metal layers}) are used to connect the different components to comply with the designed functionalities of the device. We usually distinguish two types of components in metal layers: {\em vias} and {\em `wires'}. The former components are used for vertical connections and the latter for horizontal ones, see Fig. \ref{integrated_circuit} for an illustration.

	\begin{center}
	\includegraphics[scale = 0.25]{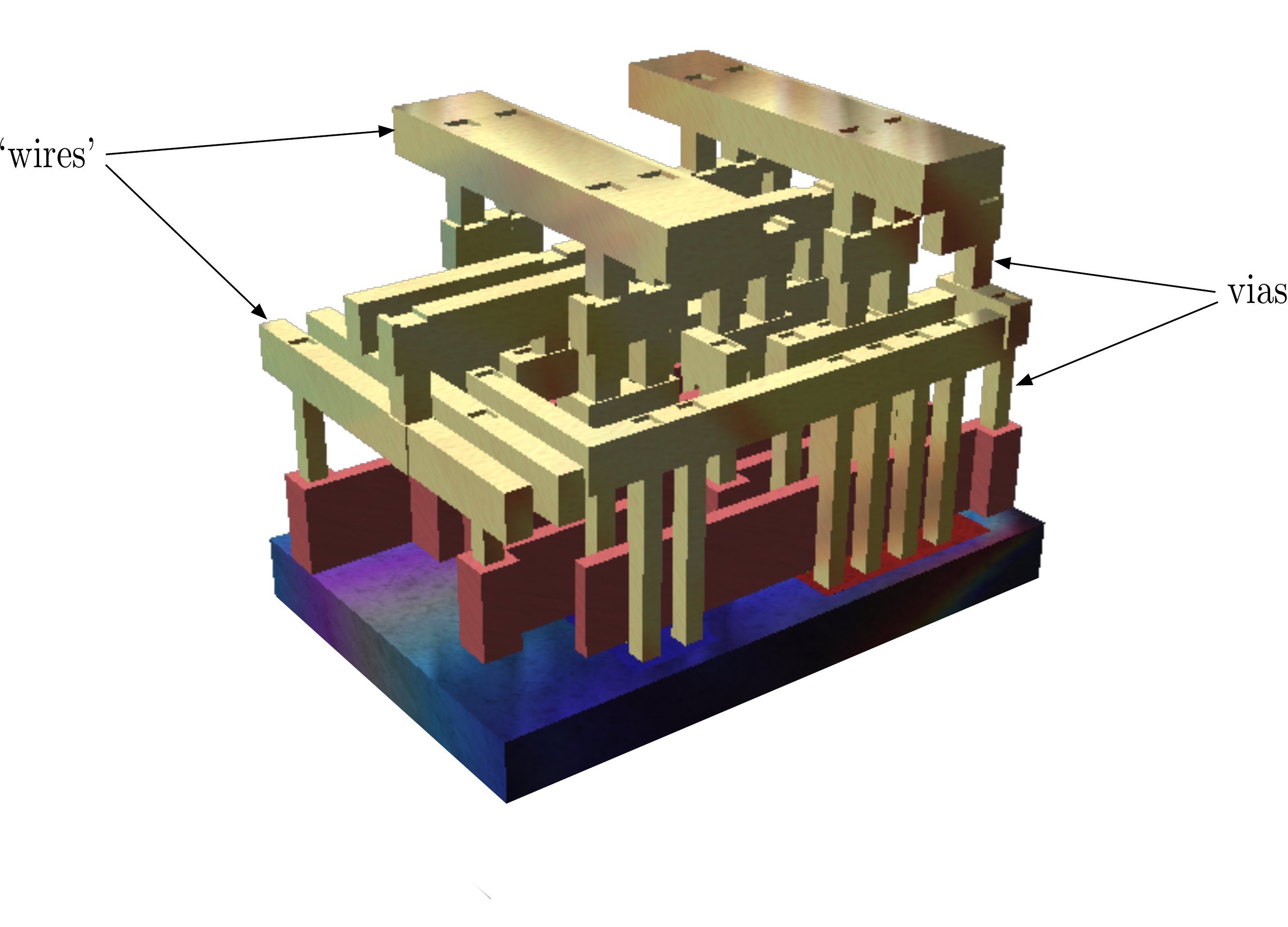} 
	\captionof{figure}{a 3D view of an integrated circuit (source: \url{https://commons.wikimedia.org/wiki/File:Silicon_chip_3d.png})}\label{integrated_circuit}
	\end{center}

Integrated circuit manufacturing involves the production of each layer of the circuit iteratively. The core technique used in  production is {\em lithography}. In brief, the idea is to etch each layer by exposing a photosensitive material to a light source through a mask: this creates a kind of {\em mould}, which is then filled with an appropriate conductor material (the mould is later removed through some chemical process).  Optical distortion might however result in the fusion of components if attention is not paid to keep the minimum distance between any two components above a certain threshold,  called {\em the lithography distance}. When some components are below this distance, the production of the mould is typically decomposed into several rounds of lithography: sub-masks are defined for (feasible) subsets of components, and the mould is produced in sequence, a process called {\em multiple patterning}. The process requires the proper alignment of the sub-masks which is a major challenge as the number of rounds increases. Hence multiple patterning induces additional costs and time in the production process and the industry tries to keep the number of patterning steps small. The problem of finding the minimum number of patterning steps readily translates into a vertex coloring problem. Indeed, one can build a graph $G$ whose node set is the set of components and two components are adjacent if they are at a distance less or equal to the lithography distance. This graph is sometimes called the {\em conflict graph}. The minimum number of rounds needed to produce the mould is the chromatic number of $G$. 

DSA-aware Multiple Patterning (DSA-MP) is a technique that combines Directed Self-Assembly (DSA) technology with lithography in order to go beyond the resolution limit imposed by lithography alone in the fabrication of integrated circuits.  Due to the nature of the process, it is particularly useful for the manufacturing of {vias}.  DSA-MP allows to introduce within a same sub-mask some vias that are closer to the lithography distance, under certain conditions.  The main idea is to intentionally fuse some vias within larger objects, called {\em guiding patterns}, and then to correct the corresponding design with DSA in a second step. More precisely, the mould, possibly obtained after applying several rounds of lithography, is filled with a {\em block copolymer} in a random state that self-organize in a structured way after a chemical reaction is triggered: if the guiding patterns are designed appropriately, and thus if the vias that are fused satisfy some specific properties, one of the two polymers assembles into a set of cylinders that can then be used for via creation after some additional processing.  We refer the reader to \cite{IP} for more details about integrated circuit manufacturing and DSA-aware multiple patterning.

One of the core problem in DSA-MP is again the problem of minimizing the number of lithography steps as the production time and costs are still dominated by lithography. We showed in \cite{IP} that several variants are of interest in the industry. One such variant consists in fusing ``small chains of vias'' and it reduces to the following variant of vertex coloring: given a graph $G=(V,E)$, a subset $F$ of edges of $E$ and an integer $k\geq 0$, color the vertices of $G$ with a minimum number of colors so that each color induces, in $G$, a disjoint union of paths of $G_F=(V,F)$ of length less or equal to $k$. We call the problem the {\em $k$-path coloring problem}. Concretely, the node set of $G$ is the set of vias in a layer, the edges correspond to the pairs of vias that are closer than the lithography distance, $F$ are the pairs of vias whose distance is within a certain range specific to the DSA technology used (but smaller or equal to the lithography distance), and $k$ is a small number. The Electronic Design Automation (EDA) industry is especially interested in the $1$-path and the $2$-path coloring problems as they correspond to the current technological capabilities.  Regular (vertex) coloring is equivalent to $0$-path coloring, which proves that (i) $k$-path coloring is NP-hard in general and (ii) that we can only improve upon standard multiple patterning by choosing $k\geq 1$.  The problem was introduced under the same name\footnote{Some authors use the same terminology for another variant of graph coloring, see for instance \cite{Frick,Johns,Mynhardt}.} in \cite{Akiyama} in the special case where $F=E$. In particular it is known that $k$-path $L$-colorability is already hard for $L=2$ and $k=1$ \cite{Jinjiang1} and for $L=3$ and $k=2$ \cite{Jinjiang2}.

Mentor graphics has developed in-house heuristics to solve the problem quickly. A typical design can be made of several hundred thousands of vias per layer and Mentor Graphics' heuristics can solve (approximately) these instances in less than a second.  However the company has interest in fast exact algorithms as well. Initially, their main interest for exact approaches  lay in  the possibility to assess the quality of their heuristics. However, because of our first encouraging results with integer programming  \cite{IP}, the company understood that exact approaches might actually find their way to production. While testing different integer programming models in \cite{IP}, we observed that  typical industrial instances exhibit some structure. We indeed noticed that most instances are extremely sparse and typically `tree-like'. The main reason is that the design of the circuit (and the placement of vias in particular) is made with lithography constraints in mind so that distances between vias is kept as large as possible. In this project, we decided to test whether exact algorithms exploiting the `tree-like' property could be competitive with heuristics (computationally wise) for production. A well-known mesure of ``tree-likeness'' is the notion of treewidth introduced by Robertson and Seymour \cite{Robertson_seymour_1986}. 

\begin{definition}{(Tree decomposition ~\cite{Robertson_seymour_1986}).}
Let $G=(V,E)$ be a graph. A {\em tree decomposition} of $G$ is a pair $(X,T)$ where $T$ is a tree and $X=\{X_1, \dots, X_n\}$ is the set of nodes of $T$, called {\em bags} such that $\forall i \in \{1, \dots, n\}$, $X_i \subseteq V$ and the three following conditions are verified:

\begin{enumerate}

	\item $\cup_{i \in \{1, \dots, n\}} X_i = V$.

	\item  $\forall (u,v) \in E, \exists X_i \in X$ such that $u,v \in X_i$.

	\item $\forall u \in V$, the bags containing $u$ is a connected sub-tree of $T$.
\end{enumerate}
\end{definition}

The {\em width} of a tree decomposition is the size of the largest bag minus one. The {\em treewidth} of a graph $G$ is the minimum width over all possible tree decomposition of $G$. In particular the treewidth of a tree is $1$ (each edge can be used as a bag).

There are many combinatorial optimization problems that are hard in general but that can be solved in polynomial time on graphs with bounded treewidth (see for instance \cite{parametrizedcomplexity}). Such problems include stable set, dominating set, vertex coloring, Steiner tree, feedback vertex set, hamiltonian path, etc...  Besides, Courcelle~\cite{Courcelle_1990} has shown that a large class of problems on graphs can be solved in linear time when restricted to graphs with bounded treewidth. Courcelle's theorem essentially states that every graph property that can be formulated in {\em monadic second order logic} (MSOL)\footnote{Monadic second order logic in graphs is a formulation of a property of a graph using logical connectors ($\land$, $\lor$ $\lnot$, $\iff$, etc...), quantification on vertices and sets of vertices ($\forall v \in V$, $\exists v \in V$, $\forall V' \subseteq V$, $\exists V' \subseteq V$), quantification on edges and sets of edges ($\forall e \in E$, $\exists e \in E$, $\forall E' \subseteq E$, $\exists E' \subseteq E$), membership tests ($e\in F$, $v\in W$, etc...) and incidence tests ($v$ endpoint of $e$, $(u,v)\in E$).}  can be decided in linear time when restricted to graphs with bounded treewidth.  It is possible to prove that $k$-path $L$-colorability falls under Courcelle's theorem umbrella, see  \cite{Dehia}. It then follows that $k$-path coloring is polynomial in graphs with bounded treewidth \footnote{It is part of folklore that the chromatic number of a graph with treewidth $w$ is at most $w+1$ (we can `greedily' color the vertices as there is always a vertex of degree less or equal to $w$, see for instance \cite{Bodlaender2}). Hence the $k$-path chromatic number is also bounded by $w+1$. We can in particular restrict testing for $k$-path $L$-colorability to $L=1,...,w+1$.}.  Unfortunately Courcelle's theorem, albeit linear in the graph size, is considered impractical \cite{parametrizedcomplexity}: {\em ``Courcelle's theorem and its
variants should be regarded primarily as classification tools, whereas designing efficient dynamic-programming routines on tree decompositions requires
`getting your hands dirty' and constructing the algorithm explicitly.''}  In this paper, we develop such a direct dynamic programming approach for the $k$-path $L$-coloring problem and we test the performances of the corresponding algorithm on real-world instances from Mentor Graphics arising from DSA-aware Multiple Patterning for $k=1$ and $k=2$ (the cases of interest in the industry).

\subsection*{Additional definitions and notations}

Given a graph $G=(V,E)$ and a subset $E'$ of edges of $E$, we call a $E'$-neighbor of a vertex $v$, a vertex $u$ such that $(u,v)\in E'$.  Also a $E'$-path denotes a path with edges in $E'$. For a graph $G$, we sometimes denote by $V(G)$ the set of vertices of $G$, by $E(G)$ the set of edges of $G$, and, for $U\subseteq V(G)$, by $E[U]$ the set of edges of $G$ induced by the vertices in $U$ (that is with both extremities in $U$). 

We call a pair $(G,f)$, where $G=(V,E)$ is a graph and $f$ is a positive weight function on its edge set, a {\em weighted graph}. We can represent a weighted graph by its {\em (weighted) adjacency matrix}, that is a matrix $A$ of size $|V|\times |V|$ such that $A(u,v)=f((u,v))>0$ for all $u,v\in V: (u,v)\in E$ and $A(u,v)=0$ for all $u,v\in V: (u,v)\not\in E$. Note that the {\em support} of $A$, that is, the 0/1 matrix of same dimension as $A$ whose ones indicate pairs $(u,v)$ for which $A(u,v)>0$, is the (standard) adjacency matrix of $G$. A graph $G$ can be considered as a weighted graph with weight function $f=\bf 1$. 

Let $P$ be a path of $G$ with node set $v_0,...,v_{p}$, for some integer $p\geq 0$, and edge set $\{(v_i,v_{i+1})$ for $i=0,...,p-1\}$ and let $U\subseteq V$. The vertices of $P$ with smallest and largest index $i$ such that $v_i\in U$ are called the $U$-extremities of $P$. If $P$ has only one node in $U$, we call the corresponding node a $U$-two-extremity. By opposition, we call $v_0,v_p$ the {\em true} extremities of $P$. We call a node of $P$ {\em internal} if it is not a true extremity. 
Given a path $P$ such that $V(P)\cap U\neq \emptyset$ and a function $f:E(P)\mapsto {\mathbb R}^+ \setminus \{0\}$ , we define the {\em trace on $U$} of the weighted path $(P,f)$ as the weighted graph obtained from $P$ by `shrinking' the internal nodes of $P$ not in $U$. More formally, if $i_1<...<i_{l}$ are the indices of the vertices of $U$ on $P$, for some integer $l\geq 1$, the trace of $(P,f)$ on $U$ is the weighted graph $(P',f')$, where $P'$ is the path with vertex set $\{v_0,v_{i_1},...,v_{i_{l}},v_{p}\}$ and edge set $\{(v_0,v_{i_1}),(v_{i_{l}},v_{p})\} \cup \{(v_{i_j},v_{i_{j+1}})$ for $j=1,...,l-1\}$,  and for any edge $e$ of $P'$, $f'(e)$ is the length with respect to $f$ of the (sub)path of $P$ in-between the two end points of $e$. We define the trace on $U$ of a union of disjoint paths as the union of the traces of each path intersecting $U$.


A {\em rooted tree decomposition} is a tree decomposition where a bag is chosen as a root and the edges of the tree are oriented in the direction of the root bag. In a rooted tree decomposition $(X,T)$, we can naturally define {\em children} and {\em parent} bags and then, for a bag $X_i$, we denote by $V_i$  the union of the bags in the subtree of $T$ rooted at $X_i$, and by $G_i$ the subgraph of $G$ induced by the nodes in $V_i$.


\section{Dynamic programming}

We now develop a dynamic programming algorithm to solve the $k$-path $L$-coloring problem on graphs with bounded treewidth. It is convenient to present the algorithm on a {\em nice tree decomposition}. 


\begin{definition}{(Nice tree decomposition).}
Let $G=(V,E)$ be a graph. A nice tree decomposition is a rooted tree decomposition of $G$ where every bag  $X_i$ of the tree has at most two children and is one of the four following types:
 \begin{itemize}
 	\item \textbf{Leaf bag:}  $X_i$ has no children and contains only one vertex $v \in V$, i.e. $|X_i|=1$.
 	\item \textbf{Introduce bag:} $X_i$ has exactly one child bag noted $X_j$ such that $X_i = X_j \cup \{v\}$ for some $v \in V$ (and $v \not \in V_j$).
 	\item \textbf{Forget bag:} $X_i$ has exactly one child bag noted $X_j$ such that $X_i = X_j \backslash \{v\}$ for some $v\in X_j$.
 	\item \textbf{Join bag:} $X_i$ has exactly two children noted $X_{j_1}$ and $X_{j_2}$ such that $X_i = X_{j_1} = X_{j_2}$.
 \end{itemize}
\end{definition}

 Kloks \cite{Kloks_1994} proved that a tree decomposition can be converted into a nice tree decomposition (with at most four times the number of vertices in $G$) in linear time, while preserving the same treewidth. We assume thus that we are given such a nice tree decomposition.

\begin{theorem} There exists an algorithm that, given a graph $G=(V,E)$, a set of edges $F \subseteq E$, and a nice tree decomposition $(X,T)$ of $G$ of width $w$ solves the $k$-path $L$-coloring problem in time $O(L^w.k^{2(3w)^2}.(3w)^2.|T|)$.
\end{theorem}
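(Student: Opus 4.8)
The plan is to run a bottom-up dynamic program over the nice tree decomposition $(X,T)$, storing at every bag a table of \emph{states} that summarise all a partial coloring of $G_i$ needs to reveal in order to be extendable to a valid $k$-path $L$-coloring of the whole graph. The guiding observation is that $X_i$ separates $V_i\setminus X_i$ from $V\setminus V_i$ in $G$, so the only features of a partial solution on $G_i$ that the rest of $G$ can react to are the colors of the vertices of $X_i$ and the way the already-built monochromatic $F$-paths enter, cross and leave $X_i$ --- that is, exactly the \emph{trace} of those paths on $X_i$ introduced above. Accordingly I would take a state to be a pair $(c,A)$, where $c:X_i\to\{1,\dots,L\}$ colors the bag and $A$ is the weighted adjacency matrix of the trace on $X_i$ of the monochromatic $F$-paths of a partial solution. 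The trace is carried on $X_i$ together with at most $2(w+1)$ anonymous \emph{dummy} vertices, one per true extremity already lying in $V_i\setminus X_i$; every trace path meets $X_i$ in at least one vertex (whence its color), the matrix entries are segment lengths capped at $k$, and the trace is a union of paths (degree at most $2$, acyclic). I call $(c,A)$ \emph{realizable} if some coloring of $V_i$ extending $c$ induces, in each color class of $G_i$, a disjoint union of $F$-paths whose components avoiding $X_i$ have length at most $k$ and whose remaining components have trace exactly $A$; the table $D_i$ is the set of realizable states. Note that ``induces a disjoint union of $F$-paths'' already forbids monochromatic edges of $E\setminus F$.

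Next I would spell out the recurrence for the four bag types. A \emph{leaf} bag $\{v\}$ has one realizable state per color, with $A$ the trivial one-vertex (length-$0$) trace. At an \emph{introduce} bag $X_i=X_j\cup\{v\}$, the tree-decomposition property forces every new $G$-edge at $v$ to land in $X_j$, so for each child state and each color of $v$ the step is deterministic: reject if $v$ meets a same-colored vertex through an edge of $E\setminus F$, otherwise insert $v$ into the trace joined by length-$1$ edges to its same-colored $F$-neighbors in $X_j$, rejecting if this raises a degree above $2$, closes a cycle, or creates a path longer than $k$. At a \emph{forget} bag $X_i=X_j\setminus\{v\}$, I suppress $v$ from the trace: an isolated $v$ is a finished length-$0$ path and is dropped, a degree-$1$ $v$ becomes a dummy true extremity on its segment, and a degree-$2$ $v$ is contracted so that its two incident segments merge with lengths added; whenever this deletes the last $X_i$-vertex of a path, that path is complete and is retained only if its total length is at most $k$. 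Finally, at a \emph{join} bag $X_i=X_{j_1}=X_{j_2}$, where $V_{j_1}\cap V_{j_2}=X_i$, I combine two child states sharing the same coloring by overlaying their traces on the common set $X_i$ (keeping the two families of dummies disjoint): degrees at each vertex of $X_i$ add and the segments meeting there concatenate, and the result is kept only if no vertex of $X_i$ exceeds degree $2$, no cycle appears, and no path exceeds length $k$.

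Correctness would follow by induction over $T$: the separator property ensures that realizability at $X_i$ is determined by the child tables, and one checks that each rule above produces precisely the traces of the partial solutions obtained by gluing child solutions along $X_i$. Appending a final chain of forget bags, the graph is $k$-path $L$-colorable iff the (empty) root table contains the empty trace, and minimising over the relevant range $L\in\{1,\dots,w+1\}$ returns the $k$-path chromatic number. For the running time, each bag carries at most $L^{w+1}(k+1)^{(3w)^2/2}$ states, since the trace has at most $3(w+1)$ vertices and its symmetric matrix has entries in $\{0,\dots,k\}$; leaf, introduce and forget revisit each state $O(L)$ times with $O((3w)^2)$ matrix editing. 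The join dominates: for each of the at most $L^{w+1}$ shared colorings it overlays every pair of child traces, i.e.\ at most $(k+1)^{(3w)^2}\le k^{2(3w)^2}$ pairs (using $k+1\le k^2$ for $k\ge 2$), each in $O((3w)^2)$ time. Summed over the $|T|$ bags this yields the announced bound $O(L^{w}\,k^{2(3w)^2}\,(3w)^2\,|T|)$.

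I expect the join (and, to a lesser degree, the forget) step to be the crux. The substantive points to verify are that overlaying the two traces reproduces the trace of the glued solution faithfully --- so that each monochromatic $F$-path is reconstructed with the correct length, and every degree violation or cycle created across the separator is detected --- and that capping segment lengths at $k$ discards only infeasible extensions. Equivalently, the heart of the proof is to show that the trace, together with the bookkeeping of dummy true extremities and of which bag vertices remain extendable, is both a \emph{faithful} and a \emph{sufficient} summary of a partial solution; once that invariant is established, the four local rules and the complexity count follow fairly mechanically.
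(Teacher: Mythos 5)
Your proposal is, in substance, the paper's own proof: the state at a bag is a bag-coloring together with the weighted trace of the monochromatic $F$-paths on the bag (a union of paths on at most $3(w+1)$ vertices with edge lengths in $\{1,\dots,k\}$), the four bag types are handled by exactly the rules the paper gives, and the complexity is dominated by the join step in the same way, yielding the stated bound. Your write-up is if anything slightly more explicit than the paper's about what ``realizable'' means and about the bookkeeping of true extremities via anonymous dummies.

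There is, however, one concrete point where your join rule as literally stated would fail, and it is precisely the subtlety the paper singles out. You say that when overlaying the two child traces ``degrees at each vertex of $X_i$ add.'' Consider two vertices $u,v\in X_i$ joined by an edge of $F$ and given the same color. That edge of $G$ lies in both $G_{j_1}$ and $G_{j_2}$, so it appears, with weight $1$, in \emph{both} child traces. Adding degrees (equivalently, taking a disjoint multigraph union of the two traces) turns this single edge into a double edge, i.e.\ a cycle of length two, and your rule would reject a perfectly valid combination; this situation is generic, not exotic. The paper's fix is to declare two trace edges with the same endpoints identical (not duplicated) exactly when both have weight $1$ --- such an edge is a genuine edge of $E[X_i]$ shared by both sides --- while edges of larger weight, or weight-$1$ edges involving a dummy, must still be kept as distinct parallel edges because they abbreviate internally disjoint subpaths through $V_{j_1}\setminus X_i$ and $V_{j_2}\setminus X_i$ (and a resulting parallel pair correctly signals a forbidden cycle). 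You flag the faithfulness of the overlay as ``the crux to verify'' but do not supply this de-duplication rule, so as written the invariant you rely on does not hold at join bags; with that one amendment (the paper realizes it as taking the entrywise maximum of the two weighted adjacency matrices after rejecting genuine multi-edges) your argument coincides with the paper's.
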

\begin{proof}
Let $G=(V,E)$ be a graph, let $F \subseteq E$, and let us consider a nice tree decomposition $(X,T)$ of $G$ of width $w$. The general idea behind a dynamic programming approach to a decision problem on graph with bounded treewidth is to evaluate in each bag $X_i$ whether there is a solution to the problem restricted to $G_i$, and to store enough information about the corresponding solutions, to propagate and extend the solutions from the children bag(s) to the parent node iteratively. Hence, in our setting, for each bag $X_i$, we would like to know if there exists a $k$-path $L$-coloring of $G_i$. A $k$-path $L$-coloring of $G_i$ is obviously a $k$-path $L$-coloring of its children bag(s). Hence we can try to identify the $k$-path $L$-coloring of $G_i$ by checking which $k$-path $L$-coloring of the children bag(s) can be extended to a $k$-path $L$-coloring of $G_i$. Now, because we want to design an efficient algorithm for testing $k$-path $L$-colorability of $G$, we cannot keep a full list of all $k$-path $L$-colorings of $G_j$ for all bags $X_j$ as the list could grow exponentially as we move up the tree. Because each color of a $k$-path $L$-coloring induces a disjoint union of $F$-paths, and because of the structure of a tree decomposition, it is enough, as we will discuss in detail below, to keep the colors of the vertices in $X_i$ and the trace on $X_i$ of the $F$-paths in each color.  We call such a solution a {\em partial solution\footnote{This is the standard terminology in the field.}} as it can be extended to build a $k$-path $L$-coloring of $G_i$. Because we can encode, for each bag $X_i$, the (disjoint) union of the traces on $X_i$ of the $F$-paths of each color as a weighted graph with at most $3w$ vertices\footnote{Each path in the trace has at least one node from $X_i$ by definition and in the worst case, three is exactly one for each path and two additional neighbors.} (we call this weighted graph the {\em trace of the solution}), the number of partial solutions is bounded by $O(L^w.k^{(3w)^2})$ for each bag, which is constant for $k$ and $w$ bounded (we could obviously use better data structures to improve upon this value). We can now explain why it is enough to keep track of the  trace of each solution of $G_i$ for bag $X_i$ to build all partial solutions as we move up the tree decomposition. We need to distinguish according to the different types of bag.

\begin{itemize}

	\item \textbf{Leaf bag:}  In a leaf bag $X_i$, we have $|X_i|=\{v\}$ for some $v\in V$ and we can enumerate all $k$-path $L$-coloring of $G_i$ by enumerating all possible coloring of $v$. The partial solutions coincide with the solutions for $G_i$ so the trace is trivial in this case: it consists in the graph with vertex set $\{v\}$ and edge set $\{\}$, and any weight function as the set of edges is empty.
	\item \textbf{Forget bag:} In a forget bag $X_i$, we delete a vertex $v$ from a child bag $X_j$ i.e. $X_i = X_j \setminus \{v\}$ for some $v\in V$. In such a node of the tree decomposition, any  partial solution $P_j$ for $X_j$ yields a  partial solution $P_i$ for $X_i$. Indeed, any $k$-path $L$-coloring for $G_j$ that would be consistent with $P_j$ would again be a $k$-path $L$-coloring of $G_i$ as $G_i$ and $G_j$ coincide. The traces of the solution on $X_i$ and $X_j$ differ though, but we can easily recover the trace on $X_i$ from the trace on $X_j$.  Indeed, we can update the coloring and the trace as follows: the coloring is kept identical but it is restricted to the nodes of $X_i$ and the new trace is simply the trace on $X_i$ of the trace in $P_j$ (that is we simply `shrink' $v$). 
	\item \textbf{Introduce bag:} In an introduced bag $X_i$, we add a vertex $v$ from a child bag $X_j$ i.e. $X_i = X_j \cup \{v\}$ for some $v \in V$ (and $v \not \in V_j$). In order to check whether a $k$-path $L$-coloring $S_j$ of $G_j$ can be extended to a $k$-path $L$-coloring of $G_i$, we only need to check which coloring $c(v)$ of $v$ is compatible with $S_j$.  Adding $v$ to a color set adds edges in the subgraph of $G$ induced by the node of color $c(v)$.  We want the resulting graph to be a union of disjoint $F$-paths of $G_i$ of length at most $k$.  Let $G^{c(v)}_i$ (resp. $G^{c(v)}_j$) be the subgraph of $G_i$ (resp. $G_j$) induced by the node of color $c(v)$. Because of the structure of a tree decomposition, $v$ is only adjacent to vertices of $X_i$ in $G_i$. It follows that in order to check whether  $S_j$ can be extended, it is enough to check whether the set of edges ${\mathcal E}\subseteq E$ incident to both $v$ and a node of  color $c(v)$ in $X_i$ are all in $F$ and that adding $v$ and $\mathcal E$ (with weight one) to the trace $(H,f)$  of $G^{c(v)}_j$ on $X_j$ yields a weighted graph $({H',f'})$ whose support ${H'}$ is a union of disjoint paths of length at most $k$ with respect to ${f'}$ (note that adding $v$ might merge two previously disjoint paths). This can be checked in linear time by adapting for instance a depth first search algorithm. In case of a positive result,  substituting the trace $(H',f')$ by $(H,f)$ actually yields the trace on $X_i$ of the extension of $S_j$ to $G_i$.
	\item \textbf{Join bag:} In a join bag $X_i$, we want to check which partial solutions obtained for two different graphs $G_{j_1}$, $G_{j_2}$ associated with the two children bags $X_{j_1},X_{j_2}$ are ``compatible'', that is, would be the restriction of a $k$-path $L$-coloring of $G_i$. The logic is pretty similar to the previous situation. Let  $S_{j_1}$ and $S_{j_2}$ be two solutions for $G_{j_1}$ and $G_{j_2}$ respectively. We first need to check that common vertices (that is, vertices in $X_i$) are colored in the same way in both solutions. Then we need to check that the graph induced by each color set is a (disjoint) union of $F$-paths of length at most $k$. Because there is no edge between vertices in $V_{j_1}\setminus X_i$ and vertices in  $V_{j_2}\setminus X_i$ by the properties of the tree decomposition, it follows that the only obstruction can come from the fact that, in a color, the union of the edges of the disjoint $F$-paths in each solution (the union is indeed the induced graph) is not a union of disjoint $F$-paths of length at most $k$ .   We can restrict attention to the trace of the solutions on $X_i$ and we only need to check that the {\em union of the trace graphs} of same color induces a disjoint union of paths of length at most $k$. There is a subtlety though and we need to be careful about the definition of the union of the trace graphs. Here we mean the union of all paths from the trace graphs with the condition that two edges with same extremities are considered identical (and thus are not duplicated in the union) only if their weight is $1$ in both solutions (note that this can only happen to edges in $E[X_i]$ as edges of length one that do not connect two vertices in $X_i$ contain at least one vertex in $V_{j_1} \setminus X_i$ or $V_{j_2} \setminus X_i$ and can thus only appear in one of the two solutions). Indeed, otherwise they must be considered different as they correspond to pieces of paths of $G$ whose internal nodes (in $V_{j_1} \setminus X_i$ or $V_{j_2} \setminus X_i$) where shrunk, and the union graph should be seen as a multi-graph. This is to deal with the special case that, when an edge of the traces of same color corresponds to an edge of $G$ between two vertices of $X_i$, it can (and will) be part of both partial solutions for $G_{j_1}$ and $G_{j_2}$. In case of a positive outcome, the weighted graph whose (weighted) adjacency matrix $A_i$ is $\max(A_{j_1}, A_{j_2})$ (where  $A_{j_1}$ and $A_{j_2}$ are the (weighted) adjacency matrices of the traces in the partial solutions for $G_{j_1}$ and $G_{j_2}$ respectively) yields the trace on $X_i$ of the associated $k$-path $L$-coloring of $G_i$. 
	
\end{itemize} 

The discussion above shows that we do not miss any partial solution as we move up the tree (and that each partial solution we build is actually valid). Hence there exist a $k$-path coloring of $G$ if and only if there exists a partial solution in the root node of the tree decomposition when applying the procedure above. 

The computational time at each node of the nice tree decomposition is dominated by the join bag case. For each coloring of the vertices of $X_i$ we then need to check whether the union of the support graph of the trace of each possible partial coloring for $X_{j_{1}}$ and $X_{j_{2}}$ are compatible: in the worst case we need to check $L^w$ coloring, and $k^{(3w)^2} \times k^{(3w)^2} $ pairs of traces, and checking that the union of the traces in each color still yields a union of disjoint paths of length at most $k$ can be done in time $O((3w)^2)$, by first building  the weighted adjacency matrix of the union of the trace (and checking that it does not contain multi-edges) and by then adapting the depth first search algorithm, since each weighted adjacency matrix has size at most $(3w)^2$. The overall complexity is thus bounded by $O(L^w.k^{2(3w)^2}.(3w)^2.|T|)$.


\end{proof}

\section{Numerical Experiments}

In ~\cite{Arnborg_1987}, Arnborg et al have proved that deciding if the treewidth of a graph $G$ is at most $w$, where $w\geq 0$, is NP-complete. However, there are good heuristics to determine a tree decomposition of a given graph $G$ with a width `close to' the treewidth. Different heuristics are presented and compared in \cite{Bodlaender3}. Moreover, Arnborg et al ~\cite{Arnborg_1987} showed that for every fixed value of $w$, there is a linear-time algorithm that finds a tree decomposition of width $w$ (if it exists). 

We propose to solve DSA-MP on real instances arising from integrated circuit manufacturing by first using a heuristic to get a `small' tree-decomposition of the graph, and by then using the dynamic programming algorithm described in the previous section to solve the problem (actually, we tailored the algorithms to the case where $k=1$ and $k=2$ to make them slightly simpler to implement, see \cite{Dehia} for the details about the corresponding implementations). We used D-FLAT to implement the corresponding algorithm~\cite{DFLAT_2014}. D-FLAT has the advantage to implement different state-of-the-art heuristics to find a close-to-optimal nice tree decompositions and offers a generic langage to describe how to extend solutions for each type of nodes of the nice tree decomposition. We ran D-FLAT iteratively to solve the $k$-path $L$-colorability problem  starting from $L=2$ and increasing $L$ until a solution was found. All tests were done on a machine equiped with an Intel(R) Xeon(R) CPU E5-2640 2.60 GHz and a memory of 529GB. As already observed, the typical industrial designs are made of several hundred thousands of vias, but because the placement of the vias is made as to anticipate as much as possible conflicts that may arise from lithography, the `conflict graph' is usually extremely sparse and contains only hundreds or thousands of different connected components. Since the optimization of each connected component can be parallelized, we focus attention on the computational time for each connected component individually.  

We report hereafter computational experiments on 23 connected components of increasing size arising from true industrial instances in Table \ref{results}. We can see that the linear time complexity is confirmed experimentally.

\begin{table}[h!]
\begin{center}
\scalebox{0.8}{\begin{tabular}{|c|c|c|c|c|c|c|c|c|c|c|}
\hline
Instance name  & $|V|$ & $|E|$ & $|F|$ &  $\omega(G)$ & $\Delta(G)$ & DFLAT\_TW(G) &  $\chi_{path}^{1}$  & cpu time (sec) & $\chi_{path}^2$ & cpu time (sec) \\
\hline
industrial\_1	& 54	& 56	& 52	&   3	& 4	& 2 &	2 &	0.35	&	2	&	0.92 \\
industrial\_2	& 61	& 86	& 85	&   3	& 5	& 2 &	2 &	0.75	&	2	&	1.62 \\ 
industrial\_3	& 89	& 112	& 107	&   3	& 5	& 2 &	2 &	0.51	&	2	&	1.5 \\
industrial\_4	& 90	& 113	& 109	&   3	& 5	& 2 &	2 &	0.98	&	2	&	1.21 \\
industrial\_5	& 96	& 111	& 105	&   3	& 4	& 2 &	2 &	0.55	&	2	&	1.73 \\
industrial\_6	& 98	& 126	& 120	&   3	& 5	& 3 &	2 &	0.84	&	2	&	3.45 \\
industrial\_7	& 102	& 144	& 141	&   3	& 5	& 2 & 2 &	0.94	&	2	&	1.47 \\
industrial\_8	& 111	& 140	& 137	&   3	& 5	& 2 &	2 &	1.45	&	2	&	2.21 \\
industrial\_9	& 114	& 131	& 125	&   3	& 4	& 2 &	2 &	0.79	&	2	&	2.21 \\
industrial\_10	& 116	& 155	& 151	&   3	& 5	& 3 &	2 &	1.15	&	2	&	2.5 \\
industrial\_11	& 119	& 142	& 136	&   3	& 4	& 3 &	2 &	1.41	&	2	&	4.02 \\
industrial\_12	& 128	& 159	& 149	&   3 & 5	& 2 &	2 &	1.17	&	2	&	2.83\\
industrial\_13	& 137	& 167	& 160	&   3	& 5	& 3 &	2 &	1.22	&	2	&	3.05\\
industrial\_14	& 159	& 196	& 188	&   3	& 5	& 2 &	2 &	1.42	&	2	&	3.53\\
industrial\_15	& 173	& 224	& 216	&   3	& 5	& 3 &	2 &	1.83	&	2	&	3.44\\
industrial\_16	& 382	& 396	& 339	&   3	& 4	& 2 &	2 &	2.57	&	2	&	4.95\\
industrial\_17	& 969	& 1001	& 900	&   3	& 3	& 2 &	2 &	7.26	&	2	&	12.52\\
industrial\_18	& 993	& 1009	& 927	&   3	& 4	& 2 &	2 &	6.19	&	2	&	12.63\\
industrial\_19	& 997	& 1047	& 906	&   3	& 4	& 2 &	3 &	8.86	&	2	&	12.82\\
industrial\_20	& 998	& 1024	& 924	&   3	& 4	& 2 &	3 &	8.65	&	2	&	12.95\\
industrial\_21	& 1900	& 1937	& 1804	&   3	& 4	& 2 &	3 &	21.09	&	2	&	26.87\\
industrial\_22	& 1912	& 1960	& 1809	&   3	& 4	& 2 &	3 &	18.22	&	2	&	26\\
industrial\_23	& 1937	& 1996	& 1812	&   3	& 4	& 2 &	2 &	6.29	&	2	&	27.02 \\
\hline
\end{tabular}}
\caption{Industrial instances characteristics and results: $\omega(G)$ is the size of the maximum clique in $G$,  $\Delta(G)$ is the maximum degree of $G$, DFLAT\_TW(G) is the width of the tree decomposition returned by D-FLAT heuristics, and $\chi_{path}^{k}$ is the $k$-path chromatic number.}\label{results}
\end{center}
\end{table}

All instances could be solved in less than 30 seconds. The 23 industrial instances used in this study share similar properties with the pseudo-industrial instances generated in \cite{IP}, when the resolution limit is set to 31nm.  We thus compared the two approaches on the same set of instances (and on the same machine) and the results are reported in  Table \ref{results2}.  

\begin{table}[h!]
\begin{center}
\scalebox{0.65}{\begin{tabular}{|l|l|l|l|l|l|l|l|l|l|l|l|l|}
\hline
 Instance & $|V| $	&$|E| $	&$|F| $ &$\omega(G)$	&$\Delta(G)$ & DFLAT\_TW(G) &  $\chi_{path}^{1}$  & DFLAT time 1 (sec) & IP 1 time (sec) & $\chi_{path}^2$ & DFLAT time 2(sec) & IP 2 time (sec)  \\
 \hline
	clip1\_31&	191& 	242&		242	& 	 	3& 	5  &3		&2	&2,71 	&0,8		&2	&465,74	&1,93\\
	clip2\_31&	139&	 	188&		188	& 	 	3& 	5  &3 	&3	&21,15 	&1,54	&2	&64,89	&2,11\\
	clip3\_31&	98&		117&		108	& 	 	3& 	4  &2	 	&2	&0,8		&0,11	&2	&1,69	&0,34\\
	clip4\_31&	120&		147&		139	& 	 	3& 	4 &3		&2	&0,87	&0,49	&2	&9,45	&0,46\\
	clip5\_31&	170&		213& 	213	& 	 	3& 	5 &3		&3	&5,45	&0,99	&2	&9,57	&1,52\\
	clip6\_31&	178&		229& 	229	& 	 	3& 	5 &3		&2	&2,1		&0,68	&2	&38,45	&1,89\\
	clip7\_31&	203&		256& 	223	& 	 	3& 	5 &3		&3	&8,83	&0,94	&2	&92,05	&0,73\\
	clip8\_31&	122&		162& 	160	& 	 	3& 	5 &3		&2	&0,83	&0,45	&2	&2,89	&1,35\\
	clip9\_31&	152&		193& 	193	& 	 	3& 	4 &3		&2	&2,8		&0,28	&2	&31,13	&1,96\\
	clip10\_31&139&	175& 	175	& 	 	4& 	4 &3		&2	&0,91	&0,51	&2	&2,3		&0,64\\
\hline
\end{tabular}}
\caption{Pseudo-industrial instances characteristics and results: $\omega(G)$ is the size of the maximum clique in $G$,  $\Delta(G)$ is the maximum degree of $G$, DFLAT\_TW(G) is the width of the tree decomposition returned by D-FLAT heuristics, DFLAT k time (resp. IP k time) represents the time used by DFLAT (resp. IP) to solve the k-path coloring problem,  and $\chi_{path}^{k}$ is the $k$-path chromatic number.}\label{results2}
\end{center}
\end{table}

It appears that the best integer programming formulations from  \cite{IP} outperforms the dynamic programming approach on these instances. It is not completely clear to us whether building upon heuristics that would be tailored to the industrial setting to find a small tree decomposition and developing a finer implementation of our dynamic programming approach (with possibly some improvement in the data structures to speed up the algorithm) could lead to substantial computational improvements and could make the technique compete (or even beat) the results obtained with IP on the typical industrial instances.  However, the results obtained for the larger pseudo-industrial instances used in \cite{IP} (when the resolution limit is set to 39nm and 49nm) did not encourage us to pursue this line of research:
\begin{itemize}
\item For instances with tree-width four, the computational times were still ``reasonable'' to hope that a better implementation could help make the technique competitive: the computation time ranges from several minutes to one hour in this case, while the IP formulation can  solve all instances within seconds or minutes (note that the size of the instances, i.e. the largest connected component, is now in the interval $[816,3850]$).  
\item For instances with tree-width five or more, the computation time are becoming prohibitive though compared to IP : several hours or more for dynamic programming versus a few minutes for IP. 
\end{itemize} 
Even though we could also use additional `tricks' to reduce the size of the instances (for instance, if an edge of $E\setminus F$ disconnects some connected component, the problem can be solved independently on both subgraphs and the solutions recombined later  - if the $k$-path chromatic number is at least two), we did not investigate this direction further, as the same techniques could also be exploited by the IP model, and this is this later direction that is currently investigated by Mentor Graphics to see whether IP can be made competitive with their in-house heuristics, that can solve all instances in less than a second (the corresponding `tricks' are already implemented in Mentor Graphics' heuristics).

\section{Acknowledgment}

This project has been partly supported by the Association Nationale de la Recherche et de la Technologie (Convention CIFRE 2015/0553).

\bibliographystyle{abbrv}
\bibliography{bib_graph,bib_semic}

\end{document}